\newtheorem{theorem}{Theorem}
\newtheorem{proposition}[theorem]{Proposition}
\newtheorem{remark}[theorem]{Remark}
\newenvironment{proof}[1][Proof]{\noindent\textbf{#1.} }{\ \rule{0.5em}{0.5em}}
\begin{document}

\preprint{APS/123-QED}

\title{The Universe as a Detector:\\ A Quantum Filtering Formulation of the Di\'osi-Penrose Model}% Force line breaks with \\
%\thanks{A footnote to the article title}%

\author{John E. Gough}
\email{jug@aber.ac.uk}
 %\altaffiliation[Also at ]{Physics Department, XYZ University.}%Lines break automatically or can be forced with \\
\author{Dylon Rees}%
 \email{dyr6@aber.ac.uk}
\affiliation{%
 Department of Physics, Aberystwyth University, Ceredigion, Wales, UK. %\textbackslash\textbackslash
}%

%\collaboration{MUSO Collaboration}%\noaffiliation

%\collaboration{CLEO Collaboration}%\noaffiliation

\date{\today}% It is always \today, today,
             %  but any date may be explicitly specified

\begin{abstract}
We consider the Di\'osi-Penrose problem but rather than postulating background gravitational fluctuations, we instead consider the quantum filter that arises from space-time homodyning the continuum of output quadrature described in the open quantum stochastic model presented here. This is described by a quantum Kushner-Stratonovich equation, typical of the form appearing in continuous-time collapse of the wave-function models in Quantum Decoherence Theory.
\end{abstract}

%\keywords{Suggested keywords}%Use showkeys class option if keyword
                              %display desired
\maketitle

%\tableofcontents

\section{Introduction}
The modern theory of filtering is due to Rudolf Kalman and has found widespread applications to signal processing ever since it was used to track the Apollo spacecraft on all six NASA missions to the Moon. The essential problem is to estimate the state of a system with a known dynamics, potentially subject to additional dynamical noise, when we have (partial information which may itself be subject to further observational noise. The least squares estimator is known as the \textit{filter} as it effectively filters out the dynamical and observational noise. What is less well-known is that there are well-defined quantum analogues applicable to quantum open systems undergoing continual measurement.

Quantum theory does not so much tell us how the microscopic world works but rather how it appears to large macroscopic observers (ourselves included) when we attempt to measure it. An early example of this issue is the Mott problem which asks the question as to why emitted $\alpha$-particles from a radioactive source in a Wilson cloud chamber lead to linear tracks when they are described quantum mechanically by a spherically symmetric wave-function. Mott's solution brought forward the realization that the cloud chamber was in fact acting as a detector \cite{Mott} with the $\alpha$-particle interacting with individual molecules of steam (a detection event that led to a condensed droplet). This was further developed by Joos and Zeh \cite{Joos_Zeh} within the framework of quantum decoherence.

An alternative is to treat this as a filtering problem: the position of the $\alpha$-particle is unknown but we learn about it through indirect observations (the droplets) with both the dynamics and the observations being noisy. To be sure, the physics underlying the problem of tracking a spacecraft using radar is fundamentally different from that of tracking a quantum particle, however, the basic statistical problem translates surprisingly naturally from the classical to quantum probability setting \cite{Slava,BvH}. This has been applied extensively to quantum optics. The Stratonovich-Kushner equation from nonlinear filtering gives the least squares estimate for a dynamic quantity conditional on the (partial) measurements obtained up to that time. And this turns out to have a quantum analogue, called the quantum filter: this is described by a stochastic master equation, typical of the form appearing in continuous-time collapse of the wave-function models in quantum decoherence theory.

We wish to consider the specific model for collapse of the wave-function due to Newtonian gravity introduced by Di\'osi \cite{Diosi87,Diosi89}, which was also qualitatively considered by Penrose \cite{Penrose}. Di\'osi considers the following stochastic Schr\"{o}dinger equation for a quantum system (consisting of several particles with position observables $\hat x_i$) experiencing decoherence from a classical field leading to a collapse in the position:
\begin{eqnarray}
    d|\psi_t\rangle 
        &=& 
        \sqrt{\kappa}\int d^3x \,  \bigg( \hat \mu(x)- \langle{\hat \mu(x)\rangle_t}\bigg)|\psi_t \rangle\, dW(x,t) \nonumber \\
       &&+        
        \bigg\{\frac{1}{i\hbar}\hat H - \frac{\kappa}{2} \int\int \frac{dx^3 d^3 y}{|x-y|}  \bigg( \hat \mu(x)^\ast- \langle{\hat \mu(x)^\ast\rangle_t}\bigg) \nonumber \\
        &&\times
    \bigg(\hat \mu(y)- \langle{\hat \mu(y)\rangle_t}\bigg) \bigg\} |\psi_t \rangle\, dt .
    \label{eq:DiosiSME}
\end{eqnarray}
Here $W(x,t)$ is a classical random noise field with mean zero and correlation
\begin{eqnarray}
    \mathbb{E} [W(x,t)W(y,s)]= \frac{1}{|x-y|}\delta (t-s).
    \label{eq:DiosiCorr}
\end{eqnarray}
There is a coupling constant $\kappa$ identified as $G/\hbar$ and one has an observable-valued function of the spatial variable $x$:
\begin{eqnarray}
    \hat \mu(x)=\hat \mu(x)^\ast =\sum_i m_i \varphi (x-\hat x_i) .
\end{eqnarray}
The function $\varphi$ is to be a real-valued approximation to a delta function. As usual $\hat H$ is the system's Hamiltonian. The equation is nonlinear in the state $| \psi_t \rangle$ as it involves the averages $\langle \hat \mu (x) \rangle_t = \langle  \psi_t | \hat \mu(x) | \psi_t \rangle$.

\bigskip

Rather than postulating background gravitational fluctuations, we instead consider how Di\'osi's stochastic master equation may arise as a quantum Kushner-Stratonovich equation for the conditioned state of a particle.

%%%%%%%%%%%%%%%%%%%%%%%%%%%%%%%%%%%%%%%%%%%%%%%
\subsection{Preliminaries}
Di\'osi's stochastic master equation (\ref{eq:DiosiSME}) gives an unraveling of the associated master equation
\begin{eqnarray}
    \frac{d\hat\rho(t)}{dt} &=& \frac{1}{i\hbar} [\hat H, \hat\rho(t)]  +\kappa\int\int \frac{dx\, dy}{|x-y|}\nonumber \\
   &&
    \times\bigg( \hat \mu(x) \hat\rho (t) \hat \mu(y)-\frac{1}{2} \big\{ \hat \rho(t) , \hat \mu(x)\hat \mu(y) \big\} \bigg).
\end{eqnarray}

For our purposes, it is more convenient to formulate the problem in the Heisenberg picture: here $\frac{d}{dt} \langle \hat X \rangle_t = \langle\mathcal{L} (\hat X) \rangle_t$ where the Di\'osi-Penrose Lindbladian is
\begin{eqnarray}
    \mathcal{L} (\hat X) &=&
    \frac{1}{i\hbar}[ \hat X ,\hat H ]
    + \frac{ \kappa }{2}\int\int \frac{dx^3 d^3y}{|x-y|} \nonumber \\
    &&\times
    \bigg( [ \hat \mu(x), \hat X ] \hat \mu(y) + \hat \mu(x) [ \hat X , \hat \mu(y) ]\bigg)  .
    \label{eq:DP_Lindbladian}
\end{eqnarray}

The super-operator Linbladian (\ref{eq:DP_Lindbladian}) is indeed a genuine Lindblad generator. To see this, we compute the dissipation $\mathscr{D}_{\mathcal{L}} (\hat X , \hat Y) = \mathcal{L}(\hat X \hat Y)-\mathcal{L}(\hat X)\hat Y-\hat X \mathcal{L}(\hat Y)$ and here we see that
\begin{eqnarray}
    \mathscr{D}_{\mathcal{L}} (\hat X ^\ast, \hat X) =  \kappa \int\int \frac{d^3x d^3y}{|x-y|}
     [ \hat \mu(x), \hat X ]^\ast [ \hat \mu(x), \hat X ] \ge 0.
\end{eqnarray}
This ensures that the semi-group generated by $\mathcal{L}$ is positive and, indeed, completely positive. Note that this formula for the dissipation holds true even if we drop the condition that the $\hat \mu(x)$ form a commutative family.

Our first order of business is to take the Di\'osi-Penrose Linbladian (\ref{eq:DP_Lindbladian}) and rewrite it in a diagonal form. Before going into the technicalities, we describe the problem in a discrete setting. Let us suppose that we have a Linbladian of the following form of a finite sum:
\begin{eqnarray}
    \mathcal{L} (\hat X) =\frac{1}{2} \sum_{ij} g^{ij} \bigg( [\hat M_i^\ast ,\hat X ]\hat M_j
    + \hat M_i^\ast [\hat X ,\hat M_j ] \bigg).
\end{eqnarray}
The $g^{ij}$ form the components of a positive definite matrix and we take the spectral expansion of this matrix to be
\begin{eqnarray}
    g^{ij}= \sum_a \lambda_a \, e^i(a) ^\ast \, e^j(a) .
    \label{eq:discrete_diag}
\end{eqnarray}
The Lindbladian may then be recast in the traditional form \cite{Lindblad}
\begin{eqnarray}
    \mathcal{L} (\hat X) =\frac{1}{2} \sum_{a} 
    \bigg( [\hat L_a^\ast ,\hat X ]\hat L_a
    + \hat L_a^\ast [\hat X ,\hat L_a ] \bigg)
\end{eqnarray}
where the collapse operators are
\begin{eqnarray}
    \hat L_a = \sqrt{\lambda_a }\, \sum_j e^j (a) \, \hat M_j .
\end{eqnarray}

At this stage, the route to developing a filtering model with these collapse operators is routine. We begin by setting up a dilation of the CP semi-group generated by $\mathcal{L}$ using the Hudson-Parthasarathy  quantum stochastic calculus \cite{HP84,Partha}; then we identify the input-output relations; finally, we fix an appropriate measurement scheme for the output fields \cite{GC85}.

The main technical challenge will be handling the continuous collection of collapse operators rather than a discrete sum.

%%%%%%%%%%%%%%%%%%%%%%%%%%%%%%%%%%%
\subsection{Diagonalizing the Poisson Equation Green's Function}
The Newtonian potential $\varphi$ generated by a mass density $\mu$ in $\mathbb{R}^3$ satisfies Poisson's equation 
\begin{eqnarray}
    \nabla^2 \varphi = - 4 \pi G \, \mu
\end{eqnarray}
where $G$ is the universal gravitational constant.
The potential is given by $\varphi (x)= \int d^3 y \, g(x,y)\, \mu (y) $ where the Green's function is
\begin{eqnarray}
    g(x,y) = \frac{ G}{|x-y|}.
\end{eqnarray}
The calculation of the Green's function is well-known, but worth recalling as it essentially gives the desired analogue to the decomposition (\ref{eq:discrete_diag}):
\begin{eqnarray}
    g(x,y) = \int d^3 \, \tilde{g} (k)  \, e_k(x)^\ast e_k(y)
\end{eqnarray}
The relevant eigen-functions are the plane waves $e_k$ parameterized by wave-number $k \in \mathbb{R}^3$:
\begin{eqnarray}
    e_k (x)= \frac{1}{(2\pi )^{3/2}}\, e^{ik \cdot x} .
    \label{eq:e-vector}
\end{eqnarray}
Fourier transform pairs are then given by
\begin{eqnarray}
    f (x) &=& \int d^3k\, \tilde{f} (k) e_k (x) \nonumber \\
    \tilde{f} (k)&=& \int d^3 x \, f (x)e_k (x)^\ast .
\end{eqnarray}
The standard completeness relations $    \int d^3 k \, e_k (x)^\ast e_k (x')$ $ =\delta^3 (x-x')$, and $  \int d^3 x \, e_k (x)^\ast e_{k'} (x) = \delta^3 (k-k')$ hold.
The Fourier transformed Poisson's equation is $-|k|^2 \tilde{\varphi}(k) = 4 \pi G \, \tilde{\mu} (k)$, or $ \tilde{\varphi} = \tilde{g}(k) \, \tilde{\mu}(k)$ with
\begin{eqnarray}
    \tilde{g}(k) = 4\pi G / |k|^2
\end{eqnarray}
In detail, we have $  g(x,y) =4\pi  G \int\frac{d^3 k}{|k|^2} \, \frac{1}{(2 \pi )^2} e^{ik \cdot (x-y)}$. The $k$-integration can be simplified by choosing the spherical polar coordinates $(\kappa , \theta , \phi )$ with $\kappa = |k|$ and  $\theta$ the angle between the vectors $k$ and $x-y$. The integral then simplifies to
 \begin{eqnarray}
    g(x,y) &=&\frac{G}{\pi}  \int_0^\infty \kappa^2 d\kappa \int_0^\pi \sin \theta d\theta \, \frac{1}{\kappa^2}e^{i\kappa |x-y|\cos \theta} \nonumber \\
    &=&
    \frac{2G}{\pi}  \int_0^\infty  d\kappa \, \text{sinc}  (\kappa |x-y|) \nonumber \\
    &=&
    \frac{G}{|x-y|}
\end{eqnarray}
where $\text{sinc} (u)= \frac{\sin u}{u}$ and we use the integral identity $\int_0^\infty \text{sinc} (u) \, du = \frac{\pi}{2}$.

\begin{proposition}
The Green's function $g$ admits a convolutional \lq\lq square root\rq\rq\, $\gamma$ such that 
\begin{eqnarray}
    \int d^3x' \, \gamma (x,x')\gamma (x' , x'') = g(x-x'') .
\end{eqnarray}
This is explicitly given by
\begin{eqnarray}
    \gamma (x,y) =
    \sqrt{\frac{G}{\pi^3}} \, \frac{1}{|x-y|^2}.
\end{eqnarray}
\end{proposition}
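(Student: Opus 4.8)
The plan is to work entirely in the plane-wave basis $\{e_k\}$ introduced above, exploiting the fact that a translation-invariant (convolutional) kernel acts diagonally on the $e_k$ with eigenvalue equal to its symmetric Fourier transform. The defining relation $\int d^3x'\,\gamma(x,x')\gamma(x',x'')=g(x-x'')$ states that the integral operator with kernel $\gamma$ squares to the operator with kernel $g$. Since both are convolutions, they are simultaneously diagonalized by the $e_k$, and the entire problem collapses to taking the pointwise positive square root of the eigenvalues $\tilde g(k)=4\pi G/|k|^2$ already computed. This is the exact continuum analogue of the discrete passage from $g^{ij}=\sum_a\lambda_a\,e^i(a)^\ast e^j(a)$ to its collapse operators.

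Concretely, I would posit the diagonal ansatz $\gamma(x,y)=\int d^3k\,\tilde\gamma(k)\,e_k(x)^\ast e_k(y)$, substitute it into the composition, and use the completeness relation $\int d^3x'\,e_k(x')^\ast e_{k'}(x')=\delta^3(k-k')$ to carry out the intermediate $x'$-integration. This yields $\int d^3x'\,\gamma(x,x')\gamma(x',x'')=\int d^3k\,\tilde\gamma(k)^2\,e_k(x)^\ast e_k(x'')$, so matching eigenvalue-by-eigenvalue against $g$ forces $\tilde\gamma(k)^2=\tilde g(k)=4\pi G/|k|^2$, whence $\tilde\gamma(k)=\sqrt{4\pi G}/|k|$ on taking the positive root.

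It then remains to invert the transform, i.e.\ to evaluate $\gamma(x,y)=\dfrac{\sqrt{4\pi G}}{(2\pi)^3}\displaystyle\int\frac{d^3k}{|k|}\,e^{ik\cdot(y-x)}$. I would do this by the same spherical-polar reduction already used for $g$: choosing $\kappa=|k|$ with $\theta$ the angle between $k$ and $x-y$, the angular integration produces a $\mathrm{sinc}$ factor and reduces the radial part to $\int_0^\infty\sin(\kappa|x-y|)\,d\kappa$. Establishing the identity $\int\frac{d^3k}{|k|}\,e^{ik\cdot r}=4\pi/|r|^2$ and collecting the constants $\tfrac{\sqrt{4\pi G}}{(2\pi)^3}\cdot\tfrac{4\pi}{|x-y|^2}=\sqrt{G/\pi^3}\,|x-y|^{-2}$ then gives precisely the claimed $\gamma$.

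The main obstacle is that neither $1/|k|$ nor $1/|x|^2$ is integrable, so $\tilde\gamma$ and $\gamma$ are genuinely tempered distributions and the radial integral $\int_0^\infty\sin(\kappa|r|)\,d\kappa$ does not converge in the naive sense. I expect to handle this by Abel regularization, inserting a convergence factor $e^{-\varepsilon\kappa}$ and letting $\varepsilon\to 0^+$ to assign the value $1/|r|$ — exactly the same (implicit) regularization that underlies the identity $\int_0^\infty\mathrm{sinc}(u)\,du=\pi/2$ used in the derivation of $g$ itself. Once this distributional reading is accepted, the verification is a routine rescaling of the calculation already carried out for the Green's function.
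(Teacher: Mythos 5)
Your derivation of the candidate kernel is essentially the paper's: both take the pointwise square root $\tilde\gamma(k)=\sqrt{4\pi G}/|k|$ of the Fourier multiplier and invert it distributionally, your Abel factor $e^{-\varepsilon\kappa}$ assigning $\int_0^\infty\sin(\kappa r)\,d\kappa=1/r$ exactly where the paper invokes the Heitler identity $\int_0^\infty e^{ixk}\,dk=\pi\delta(x)+i\,\mathrm{PV}\frac{1}{x}$, and your constants check out ($\frac{\sqrt{4\pi G}}{(2\pi)^3}\cdot\frac{4\pi}{r^2}=\sqrt{G/\pi^3}\,r^{-2}$). Where you genuinely diverge is in verifying $\gamma\ast\gamma=g$: you stay in Fourier space and match eigenvalues via the completeness relation, but that is the same formal manipulation that produced the ansatz, so your verification is the derivation re-read backwards — it composes two tempered distributions under the integral sign without independent justification. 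The paper instead checks the identity by a direct real-space computation: it reduces $\int d^3x'\,|x-x'|^{-2}|x'-x''|^{-2}$ in spherical coordinates to $\frac{2}{r}\int_0^\infty\frac{d\rho}{\rho}\ln\left|\frac{r+\rho}{r-\rho}\right|$ and evaluates the pieces $\int_0^r$ and $\int_r^\infty$ each to $\pi^2/4$, yielding $\pi^3/r$. This buys something your route does not: it shows the convolution is an \emph{absolutely convergent} integral (the $|x'|^{-2}$ singularities are locally integrable in $\mathbb{R}^3$ and the product decays like $|x'|^{-4}$ at infinity), so no principal value or regularization survives at the level of the composition — the formal Fourier answer is confirmed rather than assumed. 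Your Fourier argument can be made rigorous (it is the classical Riesz composition formula $I_1\circ I_1=I_2$ in $\mathbb{R}^3$), but as written the eigenvalue-matching step should either invoke that result or be replaced by the paper's direct check.
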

\begin{proof}
The convolutional square-root of the Green's function is then given by
\begin{eqnarray}
    \gamma (x,y)&=&\int dk^3 \, \sqrt{\tilde{g} (k)}  \, e_k(x)^\ast e_k(x') \nonumber \\
    &=& \sqrt{4\pi \, G}\int \frac{d^3k}{|k|} \, e_k(x)^\ast e_k(x') \nonumber \\
    &=&
    \sqrt{\frac{G}{\pi^3}}  \int_0^\infty  dk \, \frac{\sin (k |x-y|)}{|x-y|},
\end{eqnarray}
but this must be interpreted in the distributional sense. We may use the Heitler function $\int_0^\infty e^{ixk}dk = \pi \delta (x) +i \text{PV} \frac{1}{x}$ to write $\int_0^\infty \sin (xk) \, dk =  \text{PV} \frac{1}{x}$, where we encounter principal value terms, and this leads to the proposed expression.

To see that this has the desired property, we note that 
\begin{eqnarray*}
    \int d^3x' \, \gamma (x,x')\gamma (x' , y) = \frac{G}{\pi^3} \int d^3 x' \frac{1}{|x-x'|^2} \frac{1}{|x'-y|^2} .
\end{eqnarray*}
Let us translate the integration variable from $x'$ to $x'-x$ which we write in spherical polar form $(\rho,\theta, \phi)$ where $\rho=|x'-x|$ and $\theta $  is the angle of $x'-x$ relative to $x-x''$. Setting $r=|x-x''|$, we have
\begin{eqnarray}
   && \frac{G}{\pi^3} \text{PV} \, \int_0^\infty \rho^2d\rho\int_0^\pi d\theta \int_0^{2\pi}d\phi \,
    \frac{1}{\rho^2}\frac{1}{\rho^2+r^2-2r\rho \cos \theta} \nonumber \\
    &&=\frac{2G}{\pi^2}\int_0^\infty d\rho \int_{-1}^1 d u \,
    \frac{1}{\rho^2+r^2-2r\rho u}
    \nonumber \\
    && = \frac{2G}{\pi^2}\int_0^\infty d \rho \,
    \frac{1}{r\rho} \ln \left| \frac{r+\rho}{r-\rho}\right| .
\end{eqnarray}
The integrand here is, in fact, integrable. Moreover, the integrals $\int_0^r d \rho \,
    \frac{1}{\rho} \ln \left| \frac{r+\rho}{r-\rho}\right| $ and $\int_r^\infty d \rho \,
    \frac{1}{\rho} \ln \left| \frac{r+\rho}{r-\rho}\right| $
    are both finite and equal to $\frac{\pi^2}{4}$. Therefore, the principal value is not needed and the answer is given by $G/r$ which is the desired Green's function.
\end{proof}

\begin{remark}
    In particular, the self-energy associated with the mass distribution is
\begin{eqnarray}
    \Gamma = \int d^3x d^3 y \,  \mu (x) g(x,y) \mu(y) .
\end{eqnarray}
For a system consisting of point masses $m_i$ at positions $x_i$, we should set $\mu (x)= \sum_i m_i \, \delta (x - x_i)$. In practice, we may mollify this to $\mu (x)= \sum_i m_i \, \delta^{(\epsilon )} (x - x_i)$ where $ \delta^{(\epsilon )}$ is a regular function approximating the singular delta-function in the limit $\epsilon \to 0$.
\end{remark}

\begin{remark}
    It is tempting to try and construct a reproducing kernel Hilbert space \cite{RKHS} for the Green's function $g$ as it is clear that the self-energy is positive for all $\mu$ (even signed). However, the Green's function is singular with $g(x,x)=+\infty$. Nevertheless, The functions $\gamma_x (\cdot )= \gamma (x, \cdot)$ play the role of representers while $g(x,y) = \int d^3 \, \tilde{g} (k)  \, e_k(x)^\ast e_k(y)$ plays the role of a Mercer decomposition \cite{RKHS}.

    Indeed, let us denote by $L^2(\mathbb{R}^3)$ be the Hilbert space of square-integrable functions with the usual inner product, and  by $L^2(\mathbb{R}^3,g)$ be the Hilbert space with inner product $\langle \phi,\psi \rangle_g =\int d^3x d^3 y \,  \phi^\ast (x) g(x,y) \psi(y)  $, then for each $\psi \in L^2(\mathbb{R}^3,g)$ we may define
    \begin{eqnarray}
        \check{\psi} (x) =\langle \gamma _x , \psi\rangle
    \end{eqnarray}
    so that we have
    \begin{eqnarray}
        \langle \check{\phi} , \check{\psi} \rangle 
        =\langle \phi , \psi \rangle_g .
    \end{eqnarray}
\end{remark}
%%%%%%%%%%%%%%%%%%%%%%%%%%%%%%%%%%%%%%%%%%%%%%%%%
%%%%%%%%%%%%%%%%%%%%%%%%%%%%%%%%%%%%%%%%%%%%%%%%%
%%%%%%%%%%%%%%%%%%%%%%%%%%%%%%%%%%%%%%%%%%%%%%%%%
%%%%%%%%%%%%%%%%%%%%%%%%%%%%%%%%%%%%%%%%%%%%%%%%%
%%%%%%%%%%%%%%%%%%%%%%%%%%%%%%%%%%%%%%%%%%%%%%%%%
%%%%%%%%%%%%%%%%%%%%%%%%%%%%%%%%%%%%%%%%%%%%%%%%%

\section{Quantum Open System Model}
A quantum model for this is obtained by fixing a Hilbert space $\mathfrak{h}_{\text{sys.}}$ for the system of massive quantum particles and introducing a scalar bosonic quantum field in (Newtonian) space-time with underlying Fock space $\mathfrak{F}$.

The mass distribution $\mu (x)$ now becomes a family of commuting observables on $\mathfrak{h}_{\text{sys.}}$: for instance, we could consider the point position $x_i$ being now replaced by the position observables $\widehat{x}_i$ for the masses.

The field is described in terms of operator annihilator densities $\widehat{b}(x,t)$ satisfying the singular commutation relations
\begin{eqnarray}
    [\widehat{b}(x,t) , \widehat{b}(y,t') ^\ast]= \delta^3(x-y) \, \delta (t-t') .
\end{eqnarray}
Denoting the Fock vacuum as $| \text{Vac}\rangle$, we have that $\widehat{b}(x,t) \, | \text{Vac}\rangle =0$. 

\bigskip

In the following, it is convenient to introduce modified input processes defined as 
\begin{eqnarray}
    \widehat{\beta} (x,t) = \int d^3x' \, \gamma (x,x') \, \widehat{b}(x', t).
\end{eqnarray}
Here we have the commutation relations; 
\begin{eqnarray}
    [\widehat{\beta} (x,t) ,\widehat{\beta} (x', t')^\ast] = g(x,x') \, \delta (t-t').    
\end{eqnarray}
We also consider their quadratures:
\begin{eqnarray}
    \widehat{\phi} (x,t) &=& \widehat{\beta} (x,t)- \widehat{\beta} (x,t)^\ast \nonumber \\
    \widehat{\pi} (x,t) &=&
    \frac{1}{i} \big( \widehat{\beta} (x,t)- \widehat{\beta} (x,t)^\ast \big) .
\end{eqnarray}
The quadratures are separately Gaussian fields for the choice of vacuum state with mean-zero and two-point functions:
\begin{eqnarray}
    \langle \text{Vac} |\, \widehat{\phi} (x,t) \, \widehat{\phi} (y,t') |\text{Vac} \rangle = g(x,y) \, \delta (t-t') , \, \text{etc}.
\end{eqnarray}
However, we note that they do not commute among themselves:
\begin{eqnarray}
    [\widehat{\phi} (x,t) , \widehat{\pi} (x',t')] = 2i \, g (x-x') \, \delta (t-t').
\end{eqnarray}

On the joint Hilbert space 
$\mathfrak{h}_{\text{sys.}}\otimes \mathfrak{F}$, we consider the Hamiltonian
\begin{eqnarray}
    \widehat{\Upsilon} (t) =\widehat{H} \otimes \hat I+ 
    \int d^3x\, \widehat{\mu} (x) \otimes \widehat{\pi} (x,t).
    \label{eq:Upsilon}
\end{eqnarray}

\bigskip

It is again convenient to work in the Fourier domain $(x\to k)$. Let us introduce the densities
\begin{eqnarray}
    \tilde b (k,t) = \int d^3 x \, e_k (x)^\ast \, \widehat{b}(x,t)
\end{eqnarray}
in which case $[\tilde b( k, t) , \tilde b (k',t')^\ast ] = \delta^3 (k-k') \, \delta (t-t') $. (We note that $ \widehat{\beta} (x,t) =\sqrt{4\pi \, G} \int \frac{d^3 k}{|k|} \, e_k (x) \, \tilde b(k,t) $.) The Hamiltonian (\ref{eq:Upsilon}) may be rewritten as
\begin{eqnarray}
    &&-i\widehat{\Upsilon} (t)=-i\hat H \otimes \hat{I}\nonumber \\
    &&    + 
    \int d^3k\, \bigg( \widehat{L} (k) \otimes \tilde b (k,t)^\ast
    -\widehat{L} (k)^\ast  \otimes \tilde b (k,t) \bigg) 
\end{eqnarray}
where
\begin{eqnarray}
    \widehat{L} (k) = \frac{\sqrt{4 \pi \, G}}{|k|}\int d^3x \, e_k (x) \, \widehat{\mu} (x) .
    \label{eq:L_Fourier}
\end{eqnarray}

We remark that $\widehat{L}(k)= \frac{\sqrt{4 \pi \, G}}{|k|} \sum_i \frac{m_i}{(2\pi )^{3/2}}e^{ik \cdot \widehat{x}_i}$, for idealized point mass distributions.

%%%%%%%%%%%%%%%%%%%%%%%%%%%%%%%%%%%%%%%%%%%%%%%%%
\subsection{Quantum Stochastic Calculus}
We now need to set up a quantum stochastic calculus tailored to this problem. This will be the Hudson-Parthasarathy theory of Fock space processes, but with the color space being $\mathfrak{K}=L^2 (\mathbb{R}^3, \tilde{g} (k) \, d^3k)$. Note that we could equivalently have taken the color space to consist of functions of the space variables, but this will be $L^2(\mathbb{R}^3, g)$ and we have the unitary map from $x\mapsto \psi (x)$ to $k\mapsto \sqrt{\tilde{g}(k)} \, \tilde{\psi} (k)$.

For each fixed $k$, we define the process
\begin{eqnarray}
    \widetilde B (k, t) = \int_0^t \tilde b (k ,t') \, dt'
\end{eqnarray}
(Properly speaking, we should be working with processes of the form $\int_0^t dt' \int d^3 k $ $  \tilde f (k,t')^\ast \, b(k,t') $
where $\int_0^t dt' \int d^3 k \,  |\tilde f (k,t')|^2 < \infty$, which are well-defined operators on the Fock space, but the meaning should be clear.)

The differentials $\widetilde B (k, dt)$ are to be understood as corresponding to future-pointing increments $\widetilde B (k, t+dt)-\widetilde B (k, t)$. The quantum Ito table will be
\begin{eqnarray}
    \begin{array}{c|cc}
        \times  &  \widetilde B (k',dt) & \widetilde B (k',dt)^\ast  \\[4pt]  \hline
         &  \\
       \widetilde  B(k,dt) &  0 & \delta^3 (k-k')  \, dt \\
        \widetilde B(k,dt)^\ast & 0 & 0
    \end{array}
    .
\end{eqnarray}

%%%%%%%%%%%%%%%%%%%%%%%%%%%%%%%%%%%%%%%%%%%%%%%%%
\subsection{Unitary Dynamics}
The unitary evolution is given by the family of operators $\hat U(t)$ on $\mathfrak{h}_{\text{sys.}}\otimes \mathfrak{F}$ satisfying 
\begin{eqnarray}
    \frac{d}{dt}\hat U(t) = \frac{1}{i\hbar} \hat\Upsilon (t) \,\hat U(t),
    \label{eq:Schrodinger}
\end{eqnarray}
with $\hat U(0)=I$.

In its present form, the annihilator densities $\tilde b (k,t)$ appear to the left of $\hat U(t) $, however, it is easy to carry out the Wick ordering procedure. 

\begin{proposition}
    The equation (\ref{eq:Schrodinger}) is equivalent to the quantum stochastic differential equation
    \begin{gather}
    d\hat U(t) = - \bigg( \frac{i}{\hbar}\hat H+\frac{1}{2\hbar}\hat\Gamma \bigg) \otimes dt \, U(t)  \nonumber \\
    +\int d^3 k \bigg\{ 
    \hat L(k) \otimes \hat B(k,dt)^\ast -
    \hat L(k)^\ast \otimes \hat B(k,dt) \bigg\} \hat U(t).
\end{gather}
\end{proposition}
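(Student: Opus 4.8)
The plan is to derive the quantum stochastic differential equation by Wick-ordering the formal Schr\"odinger equation (\ref{eq:Schrodinger}). The starting point is the integrated form $\hat U(t) = I + \frac{1}{i\hbar}\int_0^t \hat\Upsilon(t')\,\hat U(t')\,dt'$, into which I substitute the expression for $-i\hat\Upsilon/\hbar$ in terms of the Fourier-domain densities $\tilde b(k,t)$ and the collapse operators $\hat L(k)$. The only subtlety is that the annihilation densities $\tilde b(k,t)$ sit to the \emph{left} of $\hat U(t)$, whereas a bona fide It\^o form requires them in Wick order (creation to the left, annihilation to the right, acting directly on the state). First I would expand $\hat U(t')$ one step further in the annihilation term, producing a double time integral; the inner annihilation density $\tilde b(k,t')$ then acts on an earlier factor of $\hat U$, and commuting it through to the right generates, via the commutation relation $[\tilde b(k,t'),\tilde b(k',t'')^\ast]=\delta^3(k-k')\delta(t'-t'')$, a contraction that collapses one of the time integrals.

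The key computational step is evaluating the It\^o correction that arises from this contraction. Commuting $\tilde b(k,t')$ past the creation term $\hat L(k')^\ast\otimes \tilde b(k',t'')^\ast$ inside the earlier $\hat U$ factor yields a delta function $\delta^3(k-k')\delta(t'-t'')$; integrating over $k'$ and over the inner time variable leaves a single-time drift contribution proportional to $\int d^3k\,\hat L(k)^\ast \hat L(k)$. I would then identify this $k$-integral with the self-energy operator: using $\hat L(k)=\frac{\sqrt{4\pi G}}{|k|}\int d^3x\,e_k(x)\hat\mu(x)$ one finds
\begin{eqnarray}
    \int d^3k\,\hat L(k)^\ast \hat L(k)
    = 4\pi G \int\frac{d^3k}{|k|^2}\,\tilde{\hat\mu}(k)^\ast\tilde{\hat\mu}(k)
    = \int d^3x\,d^3y\,\hat\mu(x)\,g(x,y)\,\hat\mu(y) = \hat\Gamma,
\end{eqnarray}
invoking the diagonalization $g(x,y)=\int d^3k\,\tilde g(k)\,e_k(x)^\ast e_k(y)$ established earlier. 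This is exactly the operator-valued self-energy $\hat\Gamma$, so the correction reproduces the $-\frac{1}{2\hbar}\hat\Gamma\,dt$ drift term, while the Hamiltonian piece $-\frac{i}{\hbar}\hat H\,dt$ passes through untouched since $\hat H$ commutes with the field.

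The remaining (non-contracted) pieces are already in It\^o form once the densities are replaced by their differentials $\hat B(k,dt)=\int_t^{t+dt}\tilde b(k,t')\,dt'$ and $\hat B(k,dt)^\ast$: the creation term contributes $\int d^3k\,\hat L(k)\otimes \hat B(k,dt)^\ast$ acting on $\hat U(t)$ directly, and the annihilation term, now properly Wick-ordered, contributes $-\int d^3k\,\hat L(k)^\ast\otimes \hat B(k,dt)$. Collecting the drift and the two martingale terms yields precisely the stated equation. The main obstacle is handling the continuum of modes rigorously: the contraction produces a formally divergent integrand unless one works with smeared test functions $\tilde f(k,t)\in L^2(\mathbb{R}^3,d^3k)$ as flagged in the remark preceding the It\^o table, and one must verify that the $k$-integral defining $\hat\Gamma$ converges --- which it does precisely because $\tilde g(k)=4\pi G/|k|^2$ is integrable against $|\tilde{\hat\mu}(k)|^2$ for the mollified mass density, guaranteeing the drift coefficient is a well-defined operator on $\mathfrak{h}_{\text{sys.}}$.
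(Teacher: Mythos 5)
Your proposal is correct and follows essentially the same route as the paper, which simply states that Wick ordering yields the It\^o correction $-\frac{1}{2\hbar}\int d^3k\,\hat L(k)^\ast\hat L(k)$ and identifies it with $-\frac{1}{2\hbar}\hat\Gamma$ via (\ref{eq:L_Fourier}); you merely spell out the contraction argument and the convergence caveats that the paper leaves implicit as ``straightforward.''
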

\begin{proof}
    The Wick ordering is straightforward and we pick up the quantum Ito correction $ -\frac{1}{2\hbar}\int d^3k \, \hat L(k)^\ast\hat L(k)$ and by substituting in (\ref{eq:L_Fourier}) we see that this reduces to the gravitational self-energy $-\frac{1}{2\hbar} \hat \Gamma$.
\end{proof}

\begin{proposition}
    The system observables evolve as $\hat X\to j_t(\hat X)= \hat U(t)^\ast (\hat X\otimes \hat I)\hat U(t)$ and we have the associated equations of motion:
    \begin{eqnarray}
        dj_t (\hat X) &=& j_t (\mathcal{L}\hat X)
           +\int d^3 k \bigg\{ 
    j_t ([\hat X,\hat L(k)] \otimes \hat B(k,dt)^\ast %\nonumber \\
    \nonumber\\
    && \qquad \qquad
    -
    j_t [\hat X,\hat L(k)^\ast ]\otimes \hat B(k,dt) \bigg\}
    \end{eqnarray}
    where the Lindblad generator is
    \begin{gather}
        \mathcal{L} \hat X = \frac{1}{i\hbar}[\hat X,\hat H] \nonumber \\
        +\frac{1}{2\hbar} \int d^3 k \bigg\{ [\hat L(k)^\ast, \hat X] \hat L(k) +
    \hat L(k)^\ast [\hat X, \hat L(k)]  \bigg\}        .
    \end{gather}
\end{proposition}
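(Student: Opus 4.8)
The plan is to apply the quantum It\^o product rule of Hudson--Parthasarathy calculus to the triple product $j_t(\hat X)=\hat U(t)^\ast (\hat X\otimes\hat I)\hat U(t)$. Since the middle factor carries no time dependence, $d(\hat X\otimes\hat I)=0$, and the product rule reduces to
\begin{eqnarray*}
    dj_t(\hat X) &=& (d\hat U^\ast)(\hat X\otimes\hat I)\hat U + \hat U^\ast(\hat X\otimes\hat I)(d\hat U) \\
    && + (d\hat U^\ast)(\hat X\otimes\hat I)(d\hat U),
\end{eqnarray*}
where the last term is the genuinely quantum It\^o correction. The differential $d\hat U^\ast$ is obtained by taking the adjoint of the QSDE of the preceding proposition: the roles of $\hat L(k)$ and $\hat L(k)^\ast$ are exchanged, $\hat B(k,dt)$ and $\hat B(k,dt)^\ast$ are swapped, and the drift coefficient becomes $\tfrac{i}{\hbar}\hat H-\tfrac{1}{2\hbar}\hat\Gamma$.

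First I would evaluate the two first-order terms. Collecting their $dt$ parts gives the Hamiltonian/self-energy contribution $(\tfrac{i}{\hbar}\hat H-\tfrac{1}{2\hbar}\hat\Gamma)\hat X+\hat X(-\tfrac{i}{\hbar}\hat H-\tfrac{1}{2\hbar}\hat\Gamma)$, which, using the self-adjointness of $\hat H$ and $\hat\Gamma$, reorganizes into $\tfrac{1}{i\hbar}[\hat X,\hat H]-\tfrac{1}{2\hbar}\{\hat\Gamma,\hat X\}$. Their noise parts, once conjugated by $\hat U(t)^\ast(\cdot)\hat U(t)$, produce the coefficients of $\hat B(k,dt)^\ast$ and $\hat B(k,dt)$. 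Here I would invoke adaptedness, namely that the future-pointing increments commute with $\hat U(t)$, so that each coefficient emerges as $j_t$ of a system operator; combining the two contributions yields $\hat X\hat L(k)-\hat L(k)\hat X=[\hat X,\hat L(k)]$ and $\hat L(k)^\ast\hat X-\hat X\hat L(k)^\ast=-[\hat X,\hat L(k)^\ast]$ respectively, matching the stated noise terms.

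The key step is the It\^o correction $(d\hat U^\ast)(\hat X\otimes\hat I)(d\hat U)$. Under the It\^o table only the pairing of the $\hat B(k,dt)$ factor from $d\hat U^\ast$ with the $\hat B(k',dt)^\ast$ factor from $d\hat U$ survives, contributing $\hat B(k,dt)\hat B(k',dt)^\ast=\delta^3(k-k')\,dt$. The attendant system operators are $\hat L(k)^\ast$ and $\hat L(k')$, so the double integral collapses against the delta function to $\int d^3k\,\hat L(k)^\ast\hat X\hat L(k)\,dt$. Adding this to $-\tfrac{1}{2\hbar}\{\hat\Gamma,\hat X\}$ and substituting $\hat\Gamma=\int d^3k\,\hat L(k)^\ast\hat L(k)$ assembles precisely the dissipator $\tfrac{1}{2\hbar}\int d^3k\big([\hat L(k)^\ast,\hat X]\hat L(k)+\hat L(k)^\ast[\hat X,\hat L(k)]\big)$, completing $\mathcal{L}\hat X$.

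I expect the main obstacle to be technical rather than conceptual: justifying the manipulation of the continuum of singular increments, in particular that the double momentum integral closes correctly against $\delta^3(k-k')$ and that the drift coefficient $\int d^3k\,\hat L(k)^\ast\hat L(k)$ indeed converges to the finite self-energy $\hat\Gamma$. This is exactly where the choice of color space $\mathfrak{K}=L^2(\mathbb{R}^3,\tilde g(k)\,d^3k)$ and the regularized mass density $\hat\mu$ earn their keep, so that the formal It\^o bookkeeping is underwritten by honest Fock-space operators.
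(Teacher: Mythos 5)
Your proposal is correct and coincides with the paper's (essentially unstated) argument: the paper merely remarks that ``the proof is an application of the quantum Ito calculus,'' and your computation---the It\^o triple-product rule on $\hat U(t)^\ast(\hat X\otimes\hat I)\hat U(t)$, the adjoint QSDE for $d\hat U(t)^\ast$, adaptedness to commute future-pointing increments past $\hat U(t)$, and the single surviving pairing $\hat B(k,dt)\,\hat B(k',dt)^\ast=\delta^3(k-k')\,dt$ producing $\int d^3k\,\hat L(k)^\ast\hat X\hat L(k)$---is exactly that application written out. The one blemish, a factor-of-$\hbar$ mismatch between your It\^o-correction term and the $\tfrac{1}{2\hbar}$ prefactor in the displayed $\mathcal{L}$, is inherited from the paper's own QSDE (whose noise coefficients would need a $1/\sqrt{\hbar}$ to square with both the Schr\"odinger equation and the stated generator), not introduced by your argument.
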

The proof is an application of the quantum Ito calculus. 
Substituting in (\ref{eq:L_Fourier}) for the collapse operators, we find that the generator may also be written in terms of the spatial distribution of the masses and that it agrees with (\ref{eq:DP_Lindbladian}).
\begin{proposition}
    The input processes are defined to be $\tilde b_{\text{in}} (k,t)= I \otimes \tilde b (k,t)$ while the output fields are 
    \begin{eqnarray}
     \tilde b_{\text{out}} (k,t)= U(t)^\ast \tilde b_{\text{in}} (k,t) U(t).   
    \end{eqnarray}
    The input-output relation is then
    \begin{eqnarray}
        \tilde b_{\text{out}} (k,t) =\tilde b_{\text{in}} (k,t) +j_t \big( \hat L(k) \big) .
    \end{eqnarray}
\end{proposition}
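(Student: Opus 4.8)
The plan is to treat this as a direct application of the quantum It\^o calculus to the triple operator product that defines the output field, working with the integrated annihilation processes $\widetilde B(k,t)$ for which the It\^o table has been specified and reading off the density form only at the end. Writing $\widetilde B_{\mathrm{out}}(k,t)=\hat U(t)^\ast(\hat I\otimes\widetilde B(k,t))\hat U(t)$, I would expand $d\widetilde B_{\mathrm{out}}(k,t)$ by the quantum It\^o product rule applied to the three factors $\hat U(t)^\ast$, $\hat I\otimes\widetilde B(k,t)$ and $\hat U(t)$, substituting the quantum stochastic differential equation of the preceding Proposition for $d\hat U$ and its adjoint for $d\hat U^\ast$. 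This produces three first-order terms together with the second-order It\^o corrections obtained by contracting every pair of differentials through the It\^o table.

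The organising observation is an adaptedness fact: $\widetilde B(k,t)$ involves the field only up to time $t$, so it commutes with the future increments $\widetilde B(k',dt)^\ast$ and $\widetilde B(k',dt)$ that appear inside $d\hat U$ and $d\hat U^\ast$, and it commutes trivially with every system operator $\hat L(k')$. This lets me slide $\widetilde B(k,t)$ freely past those brackets. Using it, the two noise contributions among the first-order terms cancel identically, because the creation and annihilation pieces enter $d\hat U$ and $d\hat U^\ast$ with opposite signs, while the two drift contributions combine into $\hat U^\ast\widetilde B(k,t)\,(G+G^\ast)\,\hat U\,dt$, where $G$ denotes the It\^o drift generator of the Proposition. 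The one genuinely new contribution is the cross term $\hat U^\ast\,\widetilde B(k,dt)\,d\hat U$: contracting the annihilation increment $\widetilde B(k,dt)$ on the left against the creation part $\hat L(k')\,\widetilde B(k',dt)^\ast$ of $d\hat U$ via $\widetilde B(k,dt)\widetilde B(k',dt)^\ast=\delta^3(k-k')\,dt$ and integrating over $k'$ yields precisely $\hat U^\ast\hat L(k)\hat U\,dt=j_t(\hat L(k))\,dt$.

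The remaining It\^o corrections are then shown to vanish or cancel. The term $d\hat U^\ast\,\widetilde B(k,dt)\,\hat U$ vanishes because $\widetilde B(k,dt)$ is annihilation-type and stands to the right of the increments of $d\hat U^\ast$, whereas the It\^o table is nonzero only when an annihilation increment stands to the left of a creation increment; the triple-differential term is of higher order. The surviving second-order term $d\hat U^\ast\,\widetilde B(k,t)\,d\hat U$ becomes, after commuting the adapted factor $\widetilde B(k,t)$ out, equal to $\hat U^\ast\widetilde B(k,t)\big(\int d^3k'\,\hat L(k')^\ast\hat L(k')\big)\hat U\,dt$, and this is exactly what cancels the first-order drift term above, since unitarity of $\hat U$ (equivalently the Wick-ordering Proposition) forces $G+G^\ast=-\int d^3k'\,\hat L(k')^\ast\hat L(k')$. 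Finally the leftover first-order field term $\hat U^\ast\,\widetilde B(k,dt)\,\hat U$ reduces to $\widetilde B(k,dt)$ by the same commutation argument together with $\hat U^\ast\hat U=\hat I$. Collecting the survivors gives $d\widetilde B_{\mathrm{out}}(k,t)=\widetilde B(k,dt)+j_t(\hat L(k))\,dt$, and passing back to densities yields the asserted input-output relation.

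I expect the main obstacle to be bookkeeping rather than anything conceptual: one must be scrupulous about operator ordering and about distinguishing future increments from adapted operators, because it is precisely the commutation of $\widetilde B(k,t)$ with the noise increments that makes the noise and self-energy drift terms drop out and isolates $j_t(\hat L(k))$. A secondary point requiring care is the continuum of modes, since the relevant contractions carry $\delta^3(k-k')$ rather than Kronecker deltas; the single collapse operator $\hat L(k)$ is recovered only after the integration $\int d^3k'$ against the delta, and the heuristic density manipulations should be kept consistent with the genuinely well-defined smeared processes flagged immediately after the definition of $\widetilde B(k,t)$.
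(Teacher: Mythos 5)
Your proposal is correct and is essentially the paper's own argument: the paper dismisses the proof as ``again a routine application of the quantum Ito calculus,'' and your expansion of $d\bigl(\hat U(t)^\ast\,\widetilde B(k,t)\,\hat U(t)\bigr)$ via the It\^o product rule, with the adaptedness-based cancellation of the noise terms, the contraction $\widetilde B(k,dt)\,\widetilde B(k',dt)^\ast=\delta^3(k-k')\,dt$ isolating $j_t(\hat L(k))\,dt$, and the unitarity identity killing the drift against the second-order correction, is exactly that routine computation carried out in full. Your closing caveats about operator ordering, future-pointing increments, and the continuum delta contractions are apt but introduce nothing beyond the standard treatment the paper invokes.
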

This is again a routine application of the quantum Ito calculus. For our purposes, it is more convenient to write this as
\begin{eqnarray}
    \widetilde{B}_{\text{out}} (k,dt) =\widetilde{B}_{\text{in}} (k,dt) +j_t \big( \hat L(k) \big) dt.
\end{eqnarray}
%%%%%%%%%%%%%%%%%%%%%%%%%%%%%%%%%%%%%%%%%%%%
%%%%%%%%%%%%%%%%%%%%%%%%%%%%%%%%%%%%%%%%%%%%
%%%%%%%%%%%%%%%%%%%%%%%%%%%%%%%%%%%%%%%%%%%%
%%%%%%%%%%%%%%%%%%%%%%%%%%%%%%%%%%%%%%%%%%%%
%%%%%%%%%%%%%%%%%%%%%%%%%%%%%%%%%%%%%%%%%%%%
\section{Derivation of the Quantum Filter}
We consider the problem of continuously monitoring the following quadratures of the output fields for each $k$:
\begin{eqnarray}
    \hat Y(k,t) = \widetilde{ B} _{\text{out}} (k,t) +\widetilde{ B} _{\text{out}} (k,t)^\ast .
\end{eqnarray}
At time $t$, we will have measured the observables $\hat Y(k,s)$ where $k \in \mathbb{R}^3$ and $0\le s \le t$. We denote the corresponding von Neumann algebra generated this way as $\mathfrak{Y}_t$. In this way, we obtain a filtration of nested, commutative von Neumann algebras $\{ \mathfrak{Y}_t: t \ge 0 \}$.

Our goal is to find the quantum conditional expectation $\pi_t (\hat X)$ for the Heisenberg picture value $j_t (\hat X )$ of any system observable on to $\mathfrak{Y}_t$.

\begin{theorem}[The Quantum Filtering Equation]
The filter satisfies the stochastic differential equation
    \begin{eqnarray}
        d\pi_t (\hat X ) &=& \pi_t (\mathcal{L} \hat X) \, dt
        +\int d^3k \, \bigg\{ \pi_t \big( \hat X \hat L(k) + \hat L (k)^\ast \hat X \big) \nonumber \\
        &&- \pi_t (\hat X) \pi_t \big( \hat L(k)+\hat L (k)^\ast \big) \bigg\}\, I(k,dt ),
        \label{eq:Stratonovich_Kushner}
    \end{eqnarray}
    where the innovations processes $I(k,t)$ are defined as
    \begin{eqnarray}
        \hat I(k, dt) = \hat Y(k,dt) - \pi_t \big( \hat L((k) + \hat L(k)^\ast \big) \, dt .
    \end{eqnarray}
\end{theorem}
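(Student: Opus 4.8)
The plan is to follow the standard route of quantum (Belavkin) filtering, adapted to the continuum of measurement channels indexed by $k$. First I would establish the two non-demolition conditions that make the conditional expectation $\pi_t(\hat X)=\mathbb{E}[\,j_t(\hat X)\mid\mathfrak{Y}_t\,]$ well defined. Writing $\hat Q_{\text{in}}(k,t)=\widetilde B_{\text{in}}(k,t)+\widetilde B_{\text{in}}(k,t)^\ast$, the adaptedness (cocycle) property of $\hat U(t)$ together with the fact that the input fields commute with all system operators gives, for every $s\le t$, the identity $\hat Y(k,s)=\hat U(t)^\ast\big(\hat I\otimes\hat Q_{\text{in}}(k,s)\big)\hat U(t)$, since the field increment $\widetilde B_{\text{in}}(k,s)$ commutes with the portion of the evolution acting on $[s,t]$. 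The input quadratures $\hat Q_{\text{in}}(k,s)$ form a commutative family (their mutual commutators cancel by the Ito table), so conjugation by the single unitary $\hat U(t)$ shows $\mathfrak{Y}_t$ is commutative; and $[\hat X\otimes\hat I,\ \hat I\otimes\hat Q_{\text{in}}(k,s)]=0$ gives $[\,j_t(\hat X),\hat Y(k,s)\,]=0$ for $s\le t$. These two facts guarantee that $\pi_t(\hat X)$ exists and is the unique $\mathfrak{Y}_t$-measurable element characterised by $\mathbb{E}\big[(j_t(\hat X)-\pi_t(\hat X))\,C\big]=0$ for all $C\in\mathfrak{Y}_t$.

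Next I would record the Ito structure of the observation. From the input-output relation, $\hat Y(k,dt)=\widetilde B_{\text{in}}(k,dt)+\widetilde B_{\text{in}}(k,dt)^\ast+j_t(\hat L(k)+\hat L(k)^\ast)\,dt$, and the Ito table leaves only the product $\hat Y(k,dt)\,\hat Y(k',dt)=\delta^3(k-k')\,dt$. Taking conditional expectations and using that the vacuum field increments have zero conditional mean gives $\mathbb{E}[\hat Y(k,dt)\mid\mathfrak{Y}_t]=\pi_t(\hat L(k)+\hat L(k)^\ast)\,dt$, so the innovations $\hat I(k,dt)=\hat Y(k,dt)-\pi_t(\hat L(k)+\hat L(k)^\ast)\,dt$ is a $\mathfrak{Y}_t$-martingale increment with the same covariation $\hat I(k,dt)\hat I(k',dt)=\delta^3(k-k')\,dt$.

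I would then posit the innovations form $d\pi_t(\hat X)=\pi_t(\mathcal{L}\hat X)\,dt+\int d^3k\,H_t(k)\,\hat I(k,dt)$, with $\mathfrak{Y}_t$-measurable gain $H_t(k)$, and determine both drift and gain by the test-process method. Taking $c_t\in\mathfrak{Y}_t$ with $dc_t=c_t\int d^3k\,f(k,t)\,\hat Y(k,dt)$ and imposing $\frac{d}{dt}\mathbb{E}[(j_t(\hat X)-\pi_t(\hat X))c_t]=0$, I expand by the quantum Ito product rule. The terms linear in $c_t$ fix the drift: the field martingale increments in $dj_t(\hat X)$ have vanishing vacuum expectation against the adapted $c_t$, leaving $\mathbb{E}[j_t(\mathcal{L}\hat X)c_t]\,dt$, which equals $\mathbb{E}[\pi_t(\mathcal{L}\hat X)c_t]\,dt$ precisely because $\pi_t$ is the conditional expectation, so the drift must be $\pi_t(\mathcal{L}\hat X)$. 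The gain is pinned down by the remaining $O(dt)$ pieces: the Ito cross-term $\mathbb{E}[(dj_t\hat X)(dc_t)]$, in which the surviving product $\widetilde B_{\text{in}}(k,dt)\widetilde B_{\text{in}}(k',dt)^\ast=\delta^3(k-k')\,dt$ collapses the double integral and yields $-\mathbb{E}\big[\int d^3k\,f(k,t)\,j_t([\hat X,\hat L(k)^\ast])\,c_t\big]\,dt$; and the contribution $\mathbb{E}[j_t(\hat X)\,dc_t]$ coming from the measured drift $j_t(\hat L(k)+\hat L(k)^\ast)\,dt$ inside $\hat Y(k,dt)$. Using that $j_t$ is a $\ast$-homomorphism commuting with $\mathfrak{Y}_t$, the combination $\hat X(\hat L(k)+\hat L(k)^\ast)-[\hat X,\hat L(k)^\ast]$ simplifies to $\hat X\hat L(k)+\hat L(k)^\ast\hat X$, while subtracting the $\pi_t(\hat X)\,dc_t$ piece supplies the nonlinear term. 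Since $f$ and $c_t$ are arbitrary, matching integrands gives $H_t(k)=\pi_t(\hat X\hat L(k)+\hat L(k)^\ast\hat X)-\pi_t(\hat X)\,\pi_t(\hat L(k)+\hat L(k)^\ast)$, which is exactly (\ref{eq:Stratonovich_Kushner}).

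The main obstacle is the continuum of channels $k\in\mathbb{R}^3$: each step that in the discrete theory is a finite sum here becomes an integral against $\tilde g(k)\,d^3k$, and one must justify interchanging the $k$-integration with the conditional expectation and with the Ito limits, as well as control the distributional $\delta^3(k-k')$ produced by the covariation. This is cleanest working in the color space $\mathfrak{K}=L^2(\mathbb{R}^3,\tilde g(k)\,d^3k)$ and smearing the observation against test fields $f(k,t)$, so that every object is a genuine Fock-space operator and the single-$k$ statements above are shorthand for their smeared counterparts. I expect establishing the non-demolition property rigorously at this level of generality, rather than the algebraic determination of the gain, to be the step requiring the most care.
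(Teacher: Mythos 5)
Your proposal is correct and follows essentially the same route as the paper: both use the characteristic-function (test-process) method, positing an ansatz driven by the observations, imposing the orthogonality $\mathbb{E}\big[(j_t(\hat X)-\pi_t(\hat X))\,c_t\big]=0$ against processes $dc_t = c_t\int d^3k\, f(k,t)\,\hat Y(k,dt)$, expanding by the quantum Ito rule, and matching coefficients of $f$ to extract the gain $\pi_t(\hat X\hat L(k)+\hat L(k)^\ast\hat X)-\pi_t(\hat X)\pi_t(\hat L(k)+\hat L(k)^\ast)$. Your algebra checks out (and your explicit verification of the non-demolition conditions and the smearing over $\mathfrak{K}=L^2(\mathbb{R}^3,\tilde g(k)\,d^3k)$ makes rigorous what the paper treats as understood).
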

\begin{proof}
    We shall adapt the characteristic function technique. To this end, we select an arbitrary function $f(k,t)$ and construct a stochastic process $t\mapsto \hat C(k,t)$ for each fixed $k$ by
    \begin{eqnarray}
        \hat C(k, dt ) = f(k,t) \, \hat C(t,k) \, Y(k,dt),
    \end{eqnarray}
    with $\hat C(k,0)=1$. We make the ansatz that the filter satisfies a stochastic differential equation of the form
    \begin{eqnarray}
        d\pi_t (\hat X ) = \alpha_t \, dt + \int d^3k\, 
        \beta_t (k) \hat Y (k, dt ) 
    \end{eqnarray}
    where $\alpha_t$ and $\beta_t (k)$ belong to $\mathfrak{Y}_t$.

    The conditional expectation has the property
    \begin{eqnarray}
        \left\langle \big[ \pi_t (\hat X) - j_t (\hat X)\big] \int d^3k' \, \hat C(k', t) \right\rangle=0
    \end{eqnarray}
    This takes the form $\langle S_tR_t\rangle=0$ and taking the time differential leads to $\langle dS_t \, R_t\rangle
    +\langle S_t\, dR_t\rangle+\langle dS_t\, dR_t\rangle =0$ by the Ito rule. We compute these three terms separately:
\begin{widetext}
\noindent        the first is
        \begin{gather}
        \left\langle \big[ d\pi_t (\hat X) - dj_t (\hat X)\big] \int d^3k' \, \hat C(k', t) \right\rangle
        =\left\langle \big[ \alpha_t + \int d^3 k \,  \beta_t (k) j_t \big( \hat L(k) +\hat L (k)^\ast \big) - j_t (\mathcal{L} \hat X ) \big] \int d^3k' \, \hat C(k', t) \right\rangle
        ;
    \end{gather}
        the second is
 \begin{gather}
        \left\langle \big[ \pi_t (\hat X) - j_t (\hat X)\big] \int d^3k' \, C(k', dt) \right\rangle
        =\left\langle \big[ \pi_t (\hat X) - j_t (\hat X) \big] \int d^3k' \, f(k',t) \, \hat C(k',t) \, Y(k',dt)\right\rangle
        ;
    \end{gather}
   while the third is
\begin{gather}
        \left\langle \big[ d\pi_t (\hat X) - d j_t (\hat X)\big] \int d^3k' \, C(k', dt) \right\rangle
        =\left\langle \int d^3k \, \beta (k,t)  f(k,t) \, \hat C(k,t) \right\rangle \, dt
                +
        \left\langle \int d^3k \,   f(k,t) \, \hat C(k,t) \, j_t \big( [ \hat L (k) , \hat X] \big) \right\rangle \, dt
        .
    \end{gather}
    \end{widetext}

    To obtain the third term, we used the Ito correction and the identities
    \begin{eqnarray}
        \widetilde{B}(k, dt) \, Y (k', dt) &=& \delta ^3 (k-k') dt, \nonumber \\
        \widetilde{B}(k, dt)^\ast \, Y (k', dt) &=& 0,
    \end{eqnarray}
    coming from the quantum Ito table.

    We now use the fact that the functions $f(k,t)$ were arbitrary and extract the coefficients of $ \hat C(k,t)$ and $f(k,t) \, \hat C(k,t)$, respectively, to obtain
    \begin{widetext}
        \begin{eqnarray}
        \pi_t \bigg( \big( \pi_t (\hat X) - j_t (\hat X )\big) j_t\big( \hat L (k) + \hat L(k)^\ast \big)\bigg)
        -\pi_t \big( [\hat L(k)^\ast , \hat X ]\big) +\beta_t (k)=0, \nonumber \\
        \alpha_t +\int d^3k \, \beta_t (k) \pi_t \big( L(k)+L(k)^\ast \big) - \pi_t (\mathcal{L} \hat X ) =0.
    \end{eqnarray}
    \end{widetext}
    
    Here we used the fact that $\pi_t (\alpha_t)= \alpha_t$ and $\pi_t (\beta_t(k))= \beta_t (k)$ as these processes already belong to $\mathfrak{Y}_t$, so that further coarse-graining is makes no change: $\pi_t ( j_t (\hat X))= \pi_t (\hat X)$, etc.

    We may now rearrange this to obtain (\ref{eq:Stratonovich_Kushner}).
\end{proof}

As is standard in filtering theory, the increments of the innovations are the observed (measured!) increments minus the expected increments. They have the statistics of a Gaussian field and will have the Ito rule
\begin{eqnarray}
    I(k,dt) \, I (k' , dt ) = \delta^3 (k-k') \, dt .
\end{eqnarray}

\subsection{Comparison with Di\'osi's Model}
The filter has been derived in the momentum representation, but is straightforward to return to the spatial representation. The corresponding innovations are given by
\begin{eqnarray}
    W(x,dt ) = \int d^3 k \frac{\sqrt{4 \pi \, G}}{|k|} \, e_k (x) \, I ( k , dt ) .
\end{eqnarray}

It follows that
\begin{eqnarray}
    \int d^3 k \, \hat L(k) \, I(k, dt ) = \int d^3 x \, \hat \mu (x) \, W(x, dt) ,
\end{eqnarray}
and we have the Ito rule
\begin{eqnarray}
    W(x, dt) \, W (x' , dt) = g (x,x' ) \, dt .
\end{eqnarray}

We therefore recover the Gaussian classical noise process introduced by Di\'osi with correlation (\ref{eq:DiosiCorr}).

The filter equation then becomes the stochastic differential equation
    \begin{eqnarray}
        d\pi_t (\hat X ) &=& \pi_t (\mathcal{L} \hat X) \, dt
        +\int d^3x \, \bigg\{ \pi_t \big( \hat X \hat \mu (x) + \hat \mu (x)^\ast \hat X \big) 
        \nonumber \\
        &&- 2\pi_t (\hat X) \pi_t \big( \hat \mu (x)  \big) \bigg\}\, W(x,dt ),
        \label{eq:Stratonovich_Kushner_Diosi}
    \end{eqnarray}

To make the connection with Di\'osi's equation, we set
\begin{eqnarray}
    \pi_t (\hat X ) = \langle \psi_t | \, \hat X | \, \psi_t \rangle.
\end{eqnarray}
Using (\ref{eq:DiosiSME}) for $|\psi_t \rangle$, we obtain (\ref{eq:Stratonovich_Kushner_Diosi}).

\subsection{Number Counting}
An alternative strategy is to measure the number of quanta for each $k$: this is the family of observables formally given by $Y(k,dt) = \tilde b_{\mathrm{out}}(k,t) \tilde b_{\mathrm{out}} (k,t)^\ast \, dt$. 
In analogy with the quantum optics situation, the quantum filter can be shown to satisfy
\begin{gather}
        d\pi_t (\hat X ) = \pi_t (\mathcal{L} \hat X) \, dt
        \nonumber \\
        +\int d^3k \, \bigg\{ \frac{\pi_t \big( \hat L (k)^\ast \hat X \hat L(k)\big)}{ \pi_t \big( \hat L (k)^\ast \hat L(k) \big)} - \pi_t (\hat X ) \big) \bigg\}\, I(k,dt ),
        \end{gather}
    where the innovations processes $I(k,t)$ are now compensated Poisson processes given by
    \begin{eqnarray}
        \hat I(k, dt) = \hat Y(k,dt) - \pi_t \big( \hat L (k)^\ast \hat L(k)  \big) \, dt .
    \end{eqnarray}
If we return to the space domain, we see that we no longer get a simply local form for the filtered observable in terms of the observables $\hat \mu (x)$.
%%%%%%%%%%%%%%%%%%%%%%%%%%%%%%%%%%%%%%%%%%%%%%%%%%%%%%
%%%%%%%%%%%%%%%%%%%%%%%%%%%%%%%%%%%%%%%%%%%%%%%%%%%%%%
%%%%%%%%%%%%%%%%%%%%%%%%%%%%%%%%%%%%%%%%%%%%%%%%%%%%%%
%%%%%%%%%%%%%%%%%%%%%%%%%%%%%%%%%%%%%%%%%%%%%%%%%%%%%%

\section{Discussion and Conclusion}
There is a view among physicists that they have a moral duty to produce a theory that unifies quantum theory and gravity. However, some researchers (notably Dyson) have argued that the two aspects cannot be reconciled in any meaningful way and that the Einsteinian theory of gravity may not be quantum. Additionally, the quantum measurement problem itself becomes problematic in any quantum gravity setting since the apparatus itself must be subject to the quantum fluctuations of space-time. As Dowker \cite{Dowker} has aptly pointed out \lq\lq \textit{The Copenhagen requirement for a physical measuring instrument in physical space external to the quantum system, is incompatible with the quantum system being spacetime.}\rq\rq

In the present paper we have looked at the Di\'osi-Penrose which deals with the simpler problem of Newtonian gravity, but they still address the main issue that superpositions of states, each having a massive particle localized in separate locations, ought to have a short decoherence time. 

We propose an alternative viewpoint here. Rather than treating space-time as a background source of fluctuations adding additional noise to any measurement apparatus, we would consider space-time itself as \textit{being} the measurement apparatus. The Di\'osi stochastic master equation then becomes (in the appropriate Newtonian gravitational approximation) the quantum filter that arises from space-time homodyning the continuum of output quadrature described in the open quantum stochastic model presented here. In short, the collapse of the wave-function occurs because the Universe is continuously observing its components. 

To be clear, we are not providing any specific mechanism at play here as to how the Universe functions as a detector. Much less do we explain how the measurement readout would tell space-time how to curve, though this is in spirit an analogue of quantum feedback applications from quantum optics. Instead, we are remarking that the Universe is large, macroscopic, and more-or-less classical (as far as General Relativity is concerned) and that quantum mechanics is not concerned with any underlying reality of microscopic systems but rather tells us what large macroscopic systems see when they measure them. But what we show, as a first step, is that the Di\'osi-Penrose theory can be formulated as a quantum filtering problem.

%%%%%%%%%%%%%%%%%%%%%%%%%%%%%%%%%%%%%%%%%%%%%%%%%%%%%%

%\begin{acknowledgments}
%We wish to acknowledge the support of the author community in using REV\TeX{}, offering suggestions and encouragement, testing new versions, \dots.
%\end{acknowledgments}

%\appendix

%%%%%%%%%%%%%%%%%%%%%%%%%%%%%%%%%%%%%%%%%%%%%%%%%%%%%%
%%%%%%%%%%%%%%%%%%%%%%%%%%%%%%%%%%%%%%%%%%%%%%%%%%%%%%

\end{document}